\theoremstyle{plain}
\newtheorem{theorem}{Theorem}
\newtheorem{lemma}[theorem]{Lemma}
\newtheorem{corollary}[theorem]{Corollary}
\theoremstyle{definition}
\newtheorem{definition}[theorem]{Definition}
\begin{document}

\title{Some results related to the conjecture by Belfiore and Sol\'e}

\newtheorem{summary}[theorem]{Summary}
\author{Anne-Maria Ernvall-Hyt\"onen}
\thanks{Department of Mathematics and Statistics, University of Helsinki, Finland, e-mail: anne-maria.ernvall-hytonen@helsinki.fi. The work was supported by the Academy of Finland.}
\maketitle

\begin{abstract}
In the first part of the paper, we consider the relation between kissing number and the secrecy gain. We show that on an $n=24m+8k$-dimensional even unimodular lattice, if the shortest vector length is $\geq 2m$, then as the number of vectors of length $2m$ decreases, the secrecy gain increases. We will also prove a similar result on general unimodular lattices. We will also consider the situations with shorter vectors. Furthermore, assuming the conjecture by Belfiore and Sol\'e, we will calculate the difference between inverses of secrecy gains as the number of vectors varies. We will show by an example that there exist two lattices in the same dimension with the same shortest vector length and the same kissing number, but different secrecy gains. Finally, we consider some cases of a question by Elkies by providing an answer for a special class of lattices assuming the conjecture of Belfiore and Sol\'e. We will also get a conditional improvement on some Gaulter's results concerning the conjecture.
\end{abstract}

\section{Introduction}
Belfiore and Oggier defined in \cite{belfioggiswire} the secrecy gain
\[
\chi_{\Lambda}=\max_{y\in \mathbb{R}, 0<y}\frac{\Theta_{\mathbb{Z}^n}(yi)}{\Theta_{\Lambda}(yi)},
\]
where
\[
\Theta_{\Lambda}(z)=\sum_{x\in \Lambda}e^{\pi i ||x||^2z}
\]
as a new lattice invariant to measure how much confusion the eavesdropper will experience while the lattice $\Lambda$ is used in Gaussian wiretap coding. The function $\Xi_{\Lambda}(y)=\frac{\Theta_{\mathbb{Z}^n}(yi)}{\Theta_{\Lambda}(yi)}$ is called the secrecy function. Belfiore and Sol\'e then conjectured in \cite{belfisole} that the secrecy function attains its maximum at $y=1$, which would then be the value of the secrecy gain. Ernvall-Hyt\"onen \cite{oma:ieee} proved this for all known (and for some possibly existing) extremal lattices), and derived a method to prove or disprove the conjecture for any given unimodular lattice. The secrecy gain was further studied by Oggier, Sol\'e and Belfiore in \cite{belfisoleoggis}, where the authors showed, for instance, that the secrecy gain grows exponentially with the dimension of the lattice, tied the analysis to the study of the flatness factor, and gave several examples. and by Lin and Oggier in \cite{oglinitw}, where they classified the best codes coming from unimodular lattices in dimensios $8\leq n\leq 16$.

Recently, Lin and Oggier considered unimodular lattices in dimensions up to  $23$ \cite{oggierlin}, and furthermore, they considered the dependence of the secrecy gain on the kissing number $K(\Lambda)$ of the lattice. They proved that in dimensions $16\leq n\leq 23$, for non-extremal lattices the secrecy gain is given by
\[
\chi_{\Lambda}=\frac{1}{1-\frac{2n}{2^6}+\frac{2n(n-23)+K(\Lambda)}{2^{12}}}.
\]
In particular, in this case, they proved that the smaller the kissing number, the better the secrecy gain.

The question whether one can use kissing number to find the best secrecy gain 
in general, is not that straightforward to answer: in some cases, the kissing number determines the secrecy gain, but not always.

We prove that if an even unimodular lattice in dimension $n=24m+8k$ ($k\in \{0,1,2\}$) has the shortest vector length $\geq 2m$, then the secrecy gain increases as the number of vectors of length $2m$ decreases (ie. when the kissing number decreases, or as a limit case: when there are no vectors of length $2m$ but the shortest vector length is $2m+2$, i.e. the lattice is extremal). In particular, this shows that the secrecy gain is better on extremal lattices than on lattices with vectors of length $2m$. This is done without assuming the conjecture by Belfiore and Sol\'e. However, assuming the conjecture, we can even calculate how the secrecy gain varies when the number of shortest vectors varies. We will also consider the case when the shortest vector length is $2m-2$, and illustrate why the question is very difficult in the general case. We will also consider odd lattices: if all vectors are of length at least $\left\lfloor \frac{n}{8}\right\rfloor$, then the secrecy gain increases as the number of vectors of length $\left\lfloor \frac{n}{8}\right\rfloor$ decreases. However, since there are very few lattices satisfying the condition, this theorem is not as interesting as the one concerning even lattices. Of course, it would be possible to prove that if all the vectors are of length $\geq\left\lfloor \frac{n}{8}\right\rfloor+1$, then the secrecy gain is better than if the shortest vector length is $\left\lfloor \frac{n}{8}\right\rfloor$. However, since all the lattices with shortest vector length $\left\lfloor \frac{n}{8}\right\rfloor+1$ are known, this would not give any new information past the comparisons in article \cite{oggierlin} (see \cite{conwayodlyzkosloane} for why there are very few lattices like this).

After this, we proceed to consider a case where the kissing numbers are the same but the secrecy gains different. This is done using even and odd lattices in dimension $40$. This should show that in general, one cannot judge lattices by just looking at the kissing numbers. The question remains open whether this could be done for just even lattices (or just odd lattices).

Finally, we connect the conjecture by Belfiore and Sol\'e and the question by Elkies concerning the existence of $n-8k$ lattices. Namely, assuming the conjecture, we will show that the dimension of $n-8k$ lattices without vectors of length $<k$ is bounded. When $k=3$, the result improves the result by Gaulter \cite{gaulter:characteristic}. When $k\geq 4$, no bounds were known previously.
\section{Preliminaries}
For information on theta function, one can study the book \cite{steinshakarchi} by Stein and Shakarchi. For an extensive source on lattices, one may be referred to the book \cite{conwaysloane} by Conway and Sloane. However, to increase the readability of the current article, we will briefly recall the basic facts.

Define first the following theta functions:
\begin{alignat*}{1}
\vartheta_2(\tau) & =e^{\pi i \tau/4}\prod_{n=1}^{\infty}(1-q^{2n})(1+q^{2n})(1+q^{2n-2})\\
\vartheta_3(\tau) &=\prod_{n=1}^{\infty}(1-q^{2n})(1+q^{2n-1})^2\\
\vartheta_4(\tau) &  =\prod_{n=1}^{\infty}(1-q^{2n})(1-q^{2n-1})^2,
\end{alignat*}
where $q=e^{\pi i\tau}$
Notice that $\vartheta_3$ is the theta function of the lattice $\mathbb{Z}^n$.
A lattice $\Lambda$ is called unimodular if its determinant $=\pm 1$, and the norms are integral, ie, $||x||^2\in \mathbb{Z}$ for all vectors $x\in \Lambda$. Further, it is called even, if $||x||^2$ is even for all $x\in\Lambda$. Otherwise it is called odd. A lattice can be even unimodular only if the dimension is divisible by $8$. Odd unimodular lattices exist in all dimensions: $\mathbb{Z}^n$ is an example of such.

Theta functions of even unimodular lattices in dimension $n=24m+8k$ ($k\in \{0,1,2\}$) can be written as polynomials
\[
\Theta=E_4^{3m+k}+\sum_{j=1}^m b_j E_4^{3(m-j)+k}\Delta^j,
\]
where $E_4=\frac{1}{2}\left(\vartheta_2^8+\vartheta_3^8+\vartheta_4^8\right)$ and $\Delta=\frac{1}{256}\vartheta_2^8\vartheta_3^8\vartheta_4^8$. Here $E_4$ is an Eisenstein series, and $\Delta$ the so called discriminant function. For this, and many other results in the theory of theta functions and lattices, see e.g. \cite{conwayodlyzkosloane}.

Generally, theta functions of unimodular lattices in dimension $n=8\mu+\nu$ ($\nu\in \{0,1,\dots,7\}$) can be written as polynomials:
\[
\Theta_{\Lambda}=\sum_{r=0}^{\mu}a_r\vartheta_3^{n-8r}\Delta_8^r,
\]
where $\Delta_8=\frac{1}{16}\vartheta_2^4\vartheta_4^4$. Notice that this gives an alternative representation for theta functions of even lattices.

We call an even lattice \emph{extremal}, if the shortest vectors are of length $2m+2$. Earlier, the definition of an extremal lattice stated that a lattice is extremal if the shortest vectors are of length $\left\lfloor\frac{n}{8}\right\rfloor+1$. However, Conway, Odlyzko and Sloane were able to show that there are very few extremal lattices with this definition, and they all exist in dimensions $\leq 24$ \cite{conwayodlyzkosloane}. Rains and Sloane \cite{rainssloane} showed that unless $n=23$, the shortest vector must be $2\left\lfloor\frac{n}{24}\right\rfloor+2$. Furthermore, Gaulter \cite{gaulter:minima} showed that if $24\mid n$, then any lattice obtaining this bound must be even.

\section{Increasing the secrecy gain by decreasing the number of short vectors}
In this section, we will prove two theorems, first of which corresponds to even lattices, and the second one to all unimodular lattices.
\begin{theorem}\label{melkein} Let $\Lambda$ be an even unimodular lattice in the dimension $n=24m+8k$ with $k\in \{0,1,2\}$ with the shortest vector length $\geq 2m$. Let $k_{2m}\geq 0$ be the number of vectors of the length $2m$. Then the secrecy gain increases as $k_{2m}$ decreases. Assuming the conjecture by Belfiore and Sol\'e, the difference between the inverses of the secrecy gains of two lattices with $k_{2m}$ and $k_{2m}'$ vectors of length $2m$, respectively, is $(k_{2m}-k_{2m}')\frac{3^k}{4^{6m+k}}$.
\end{theorem}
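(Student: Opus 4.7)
The plan is to exploit the polynomial basis $\{E_4^{3(m-j)+k}\Delta^j\}_{j=0}^{m}$ for theta series of even unimodular lattices in dimension $n=24m+8k$. Writing
\[
\Theta_\Lambda = E_4^{3m+k} + \sum_{j=1}^{m} b_j\, E_4^{3(m-j)+k}\Delta^j,
\]
and using $E_4 = 1+O(Q)$ and $\Delta = Q+O(Q^2)$ with $Q=e^{2\pi i\tau}$, the coefficient of $Q^\ell$ in $\Theta_\Lambda$ depends only on $b_1,\dots,b_\ell$, with $b_\ell$ appearing linearly with leading coefficient $1$. The hypothesis ``shortest vector length $\ge 2m$'' forces the $Q^1,\dots,Q^{m-1}$ coefficients to vanish, and this recursively pins down $b_1,\dots,b_{m-1}$ by values depending only on $m,k$, not on $\Lambda$. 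The remaining coefficient $b_m$ is then determined by the kissing number through a relation $b_m = k_{2m} - c(m,k)$.

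Consequently, for two lattices $\Lambda_1,\Lambda_2$ satisfying the hypotheses with kissing numbers $k_{2m}$ and $k_{2m}'$,
\[
\Theta_{\Lambda_1} - \Theta_{\Lambda_2} = (k_{2m}-k_{2m}')\, E_4^{k}\Delta^{m}.
\]
This identity is the core of both parts. For the first (unconditional) assertion, note that at $\tau = yi$ with $y>0$ the values $\vartheta_2(yi),\vartheta_3(yi),\vartheta_4(yi)$ are sums of positive reals, hence $E_4(yi)>0$ and $\Delta(yi)>0$. Therefore, if $k_{2m}<k_{2m}'$ then $\Theta_{\Lambda_1}(yi)<\Theta_{\Lambda_2}(yi)$ for every $y>0$, so $\Xi_{\Lambda_1}(y)>\Xi_{\Lambda_2}(y)$ pointwise and $\chi_{\Lambda_1}>\chi_{\Lambda_2}$ on passing to suprema. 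The extremal case ($k_{2m}=0$) is then the natural limit.

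Assuming the Belfiore--Sol\'e conjecture, both maxima are attained at $y=1$, so
\[
\chi_{\Lambda_1}^{-1} - \chi_{\Lambda_2}^{-1} = (k_{2m}-k_{2m}')\,\frac{E_4(i)^{k}\,\Delta(i)^{m}}{\vartheta_3(i)^{n}}.
\]
Using $\vartheta_2(i)=\vartheta_4(i)$ together with the Jacobi identity $\vartheta_3^4=\vartheta_2^4+\vartheta_4^4$ one obtains $\vartheta_2(i)^8=\vartheta_4(i)^8=\frac{1}{4}\vartheta_3(i)^8$, whence $E_4(i)=\frac{3}{4}\vartheta_3(i)^8$ and $\Delta(i)=4^{-6}\vartheta_3(i)^{24}$. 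Substituting and using $n=24m+8k$ collapses the ratio to $3^k/4^{6m+k}$.

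The principal obstacle is the first step: carefully verifying that the shortest-vector condition really fixes $b_1,\dots,b_{m-1}$ universally and that $b_m$ enters the expression for $k_{2m}$ with coefficient exactly $1$. Both facts follow from the leading behaviour of $E_4$ and $\Delta$ at the cusp $i\infty$ and the algebraic independence of $E_4$ and $\Delta$, but they need to be recorded cleanly; the monotonicity argument and the evaluation at $\tau=i$ are then direct.
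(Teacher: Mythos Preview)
Your proposal is correct and follows essentially the same route as the paper: both expand $\Theta_\Lambda$ in the basis $E_4^{3(m-j)+k}\Delta^j$, observe that the vanishing of the $Q^1,\dots,Q^{m-1}$ coefficients fixes $b_1,\dots,b_{m-1}$ independently of $\Lambda$ while $b_m$ differs exactly by $k_{2m}-k_{2m}'$, and then compare secrecy functions pointwise before evaluating at $y=1$. The only cosmetic difference is that the paper passes through the substitution $z=\vartheta_2^4\vartheta_4^4/\vartheta_3^8\in(0,\tfrac14]$ and works with the polynomial $(1-z)^{3m+k}+\sum_j\frac{b_j}{256^j}(1-z)^{3(m-j)+k}z^{2j}$, whereas you argue positivity of $E_4(yi)$ and $\Delta(yi)$ directly from the product formulas; since $E_4/\vartheta_3^8=1-z$ and $\Delta/\vartheta_3^{24}=z^2/256$, these are the same computation.
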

Notice that the first part of the theorem does not require the conjecture by Belfiore and Sol\'e. That is proved unconditionally. By letting $k_{2m}=0$ in the previous theorem, we have the following special case:
\begin{corollary}
The secrecy gain of an $n$-dimensional extremal even unimodular lattice, when $n=24m+8k$ ($k\in \{0,1,2\}$), is better than the secrecy gain of any even $n$-dimensional unimodular lattice with shortest vector length $2m$.
\end{corollary}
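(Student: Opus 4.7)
The plan is to exploit the explicit basis of the space of weight-$n/2$ modular forms for $SL_2(\mathbb{Z})$ to isolate the $k_{2m}$-dependence of $\Theta_\Lambda$. Because $\Delta$ has $q$-valuation $2$ with leading coefficient $1$ and $E_4$ has constant term $1$, each basis element $E_4^{3m+k-3j}\Delta^{j}$ has $q$-valuation exactly $2j$ with leading coefficient $1$. The $m-1$ conditions that the coefficients of $q^2,q^4,\ldots,q^{2m-2}$ in $\Theta_\Lambda$ vanish then determine $b_1,\ldots,b_{m-1}$ uniquely by a triangular elimination, while the single remaining coefficient $b_m$ equals the $q^{2m}$-coefficient of $\Theta_\Lambda$, namely $k_{2m}$. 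Consequently
\[
\Theta_\Lambda \;=\; \Theta_0 \;+\; k_{2m}\, E_4^{\,k}\Delta^{m},
\]
where $\Theta_0$ is the unique element of this one-parameter family with $k_{2m}=0$, that is, the would-be theta series of an extremal lattice.

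For the unconditional half, I would observe that $\vartheta_2(iy),\vartheta_3(iy),\vartheta_4(iy)$ are all strictly positive for $y>0$, so $E_4(iy)$, $\Delta(iy)$, and $\Theta_{\mathbb{Z}^n}(iy)=\vartheta_3(iy)^n$ are positive. Dividing the decomposition above by $\Theta_{\mathbb{Z}^n}(iy)$ gives
\[
\Xi_\Lambda(y)^{-1} \;=\; \frac{\Theta_0(iy)}{\vartheta_3(iy)^n} \;+\; k_{2m}\,\frac{E_4(iy)^{k}\Delta(iy)^{m}}{\vartheta_3(iy)^n},
\]
with the second summand strictly positive. Hence $\Xi_\Lambda(y)^{-1}$ increases pointwise in $k_{2m}$; since taking the minimum over $y$ preserves pointwise inequalities, $\chi_\Lambda^{-1}$ increases, and $\chi_\Lambda$ therefore decreases, as $k_{2m}$ grows.

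Under the Belfiore--Sol\'e conjecture the maximum of $\Xi_\Lambda$ is attained at $y=1$, which reduces the second half to the evaluation
\[
\frac{E_4(i)^{k}\Delta(i)^{m}}{\vartheta_3(i)^{n}} \;=\; \frac{3^{k}}{4^{6m+k}}.
\]
Since $\tau=i$ is fixed by $\tau\mapsto -1/\tau$ and this modular transformation interchanges $\vartheta_2$ and $\vartheta_4$, one has $\vartheta_2(i)=\vartheta_4(i)$; Jacobi's identity $\vartheta_3^4=\vartheta_2^4+\vartheta_4^4$ then gives $\vartheta_2(i)^8=\vartheta_4(i)^8=\vartheta_3(i)^8/4$. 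A direct substitution into the definitions of $E_4$ and $\Delta$ yields $E_4(i)=\tfrac{3}{4}\vartheta_3(i)^8$ and $\Delta(i)=\vartheta_3(i)^{24}/4^{6}$, from which the claimed ratio follows immediately, giving $\chi_\Lambda^{-1}-\chi_{\Lambda'}^{-1}=(k_{2m}-k_{2m}')\,3^k/4^{6m+k}$.

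The step requiring the most care is the triangular determination showing that the data (shortest vector length $\geq 2m$ together with the value of $k_{2m}$) pin down $\Theta_\Lambda$ up to the single summand $k_{2m}\,E_4^{\,k}\Delta^{m}$; the rest is essentially bookkeeping. Once that decomposition is in place, the positivity argument for the unconditional monotonicity and the classical evaluation at $\tau=i$ are routine facts about $\vartheta_2,\vartheta_3,\vartheta_4$.
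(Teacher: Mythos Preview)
Your approach is essentially the same as the paper's: both isolate the $k_{2m}$-dependence of $\Theta_\Lambda$ via the triangular system coming from the basis $\{E_4^{3(m-j)+k}\Delta^j\}$, then compare secrecy functions pointwise using positivity of the $\vartheta_j$ on the imaginary axis (the paper phrases this through the substitution $z=\vartheta_2^4\vartheta_4^4/\vartheta_3^8\in(0,1/4]$), and finally evaluate at $y=1$ for the conditional numerical statement.

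One slip worth fixing: the claim ``the single remaining coefficient $b_m$ equals the $q^{2m}$-coefficient of $\Theta_\Lambda$, namely $k_{2m}$'' is not quite right. The last equation of the triangular system reads
\[
a_{3m+k,0,m}+b_1a_{3(m-1)+k,1,m-1}+\cdots+b_{m-1}a_{k,m-1,1}+b_m=k_{2m},
\]
so $b_m=k_{2m}-C$ for a constant $C$ depending only on $m,k$ (via the already-determined $b_1,\dots,b_{m-1}$). What is true, and what you actually use, is that $b_m-b_m^{(0)}=k_{2m}$ where $b_m^{(0)}$ is the value for the extremal series $\Theta_0$; this is exactly what gives $\Theta_\Lambda=\Theta_0+k_{2m}\,E_4^{k}\Delta^{m}$. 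The rest of your argument is fine, and your derivation of the value at $\tau=i$ via $\vartheta_2(i)=\vartheta_4(i)$ and Jacobi's identity is a nice touch the paper simply quotes.
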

Let us now move to the proof of Theorem \ref{melkein}:
\begin{proof}
The theta series of the lattice can be written as a polynomial of the Eisenstein series $E_4$ and the discriminant function $\Delta$:
\[
\Theta=E_4^{3m+k}+\sum_{j=1}^m b_j E_4^{3(m-j)+k}\Delta^j.
\]
Ernvall-Hyt\"onen showed in \cite{oma:ieee} that the  secrecy gain is the maximal value of
\[
\left((1-z)^{3m+k}+\sum_{j=1}^m \frac{b_j}{256^j}(1-z)^{3(m-j)+k}z^{2j}\right)^{-1}
\]
in the range $z\in \left(0,\frac{1}{4}\right]$. Now we need to find how this expression changes when the kissing number changes.
Write
\begin{equation}\label{amaar}
E_4^h\Delta^j=q^{2j}+a_{h,j,1}q^{2j+2}+a_{h,j,2}q^{2j+4}+\cdots,
\end{equation}
where $q=e^{\pi i}$. If the theta function of an even unimodular $24m+8k$-dimensional ($k\in \{0,1,2\}$) lattice is of the form
\[
1+k_{2m}q^{2m}+\cdots,
\]
then to derive the coefficients $b_j$, we have to solve the following system of equations:
\[
\left\{\begin{array}{r} a_{3m+k,0,1}+b_1=0 \\
a_{3m+k,0,2}+b_1a_{3(m-1)+k,1,1}+b_2=0  \\
a_{3m+k,0,3}+b_1a_{3(m-1)+k,1,2}+b_2a_{3(m-2),2,1}+b_3=0  \\
\cdots \quad \quad \quad \cdots \\
a_{3m+k,0,m-1}+b_1a_{3(m-1)+k,1,m-2}+\cdots \quad \\+b_{m-2}a_{3+k,m-2,1}+b_{m-1}=0\\
a_{3m+k,0,m}+b_1a_{3(m-1)+k,1,m-1}+\cdots \quad \\+b_{m-1}a_{k,m-1,1}+b_m=k_{2m} 
\end{array}\right.
\]
From this system of equations, it is clear that $b_1,b_2,\cdots ,b_{m-1}$ depend only on the coefficients $a_{i,j,h}$ for suitable values of $(i,j,h)$, and not on the number $k_{2m}$. We may write
\[
b_m=k_{2m}-\left(a_{3m+k,0,m}+b_1a_{3(m-1)+k,1,m-1}+\cdots +b_{m-1}a_{k,m-1,1}\right).
\]
Let us now compare the secrecy gains of lattices $\Lambda$ and $\Lambda'$. Assume $k_{2m}>k_{2m}'$.  Denote the corresponding coefficients $b_j$ and $b_j'$, respectively. Then $b_j=b_j'$,when $j<m$, and
\begin{multline*}b_m-b_m'=k_{2m}-\left(a_{3m+k,0,m}+\cdots +b_{m-1}a_{k,m-1,1}\right)\\-\left(k_{2m'}-\left(a_{3m+k,0,m}+\cdots +b_{m-1}a_{k,m-1,1}\right)\right)\\=k_{2m}-k_{2m}'>0.
\end{multline*}
 Now the secrecy function of the lattice $\Lambda$ can be written and estimated in the following way:
\begin{multline*}
\Xi_{\Lambda}(y)=\frac{\vartheta_3^n(y)}{\Theta_{\Lambda}(y)}\\=\left(\left(1-\frac{\vartheta_2^4\vartheta_4^4}{\vartheta_3^8}(y)\right)^{3m+k}+\sum_{j=1}^m\frac{b_j}{256^j}\left(1-\frac{\vartheta_2^4\vartheta_4^4}{\vartheta_3^8}(y)\right)^{3(m-j)+k}\cdot\left(\frac{\vartheta_2^4\vartheta_4^4}{\vartheta_3^8}(y)\right)^{2j}\right)^{-1}\\<\left(\left(1-\frac{\vartheta_2^4\vartheta_4^4}{\vartheta_3^8}(y)\right)^{3m+k}+\sum_{j=1}^m\frac{b_j'}{256^j}\left(1-\frac{\vartheta_2^4\vartheta_4^4}{\vartheta_3^8}(y)\right)^{3(m-j)+k}\cdot\left(\frac{\vartheta_2^4\vartheta_4^4}{\vartheta_3^8}(y)\right)^{2j}\right)^{-1}\\=\Xi_{\Lambda'}(y),
\end{multline*}
which proves the first claim.

Let us now move to the proof of the second claim. Assuming that the secrecy gain conjecture holds, the secrecy gain is obtained at $y=1$. Recall $\frac{\vartheta_2^4\vartheta_4^4}{\vartheta_3^8}(1)=\frac{1}{4}$. Now we just need to calculate the difference $\Xi_{\Lambda}^{-1}(1)-\Xi_{\Lambda'}^{-1}(1)$:
\begin{multline*}
\Xi_{\Lambda}^{-1}(1)-\Xi_{\Lambda'}^{-1}(1)\\=\left(\left(1-\frac{\vartheta_2^4\vartheta_4^4}{\vartheta_3^8}(1)\right)^{3m+k}+\sum_{j=1}^m\frac{b_j}{256^j}\left(1-\frac{\vartheta_2^4\vartheta_4^4}{\vartheta_3^8}(1)\right)^{3(m-j)+k}\cdot\left(\frac{\vartheta_2^4\vartheta_4^4}{\vartheta_3^8}(1)\right)^{2j}\right)\\-\left(\left(1-\frac{\vartheta_2^4\vartheta_4^4}{\vartheta_3^8}(1)\right)^{3m+k}+\sum_{j=1}^m\frac{b_j'}{256^j}\left(1-\frac{\vartheta_2^4\vartheta_4^4}{\vartheta_3^8}(1)\right)^{3(m-j)+k}\cdot\left(\frac{\vartheta_2^4\vartheta_4^4}{\vartheta_3^8}(1)\right)^{2j}\right)\\=\frac{b_m-b_m'}{256^m}\left(1-\frac{\vartheta_2^4\vartheta_4^4}{\vartheta_3^8}(1)\right)^{k}\cdot\left(\frac{\vartheta_2^4\vartheta_4^4}{\vartheta_3^8}(1)\right)^{2m}=(b_m-b_m')\frac{3^{2m}}{4^{6m+k}},\end{multline*}
and now the proof is complete.
\end{proof}

The cases with shorter and shorter minimal vectors get more and more complicated: one can not get such a nice dependance. Let us now state and prove the corresponding theorem in the case where the shortest vector length is $2m-2$. After that, let us briefly discuss the difficulties in the general case.

\begin{theorem}\label{2m-2} Assume that even unimodular lattices $\Lambda$ and $\Lambda'$ in dimension $24m+8k$ do not have vectors of length $<2m-2$. Let the number of vectors of length $2m-2$ and $2m$ be $\kappa_{2m-2}$, $\kappa_{2m}$, $\kappa'_{2m-2}$ and $\kappa'_{2m}$ on lattices $\Lambda$ and $\Lambda'$, respectively. Further assume that $\kappa_{2m-2}<\kappa_{2m}$. Assuming the conjecture by Belfiore and Sol\'e, the secrecy gain of lattice $\Lambda$ is greater than the secrecy gain of $\Lambda'$ if
\[
\kappa_{2m}-\kappa_{2m}'<(\kappa_{2m-2}-\kappa_{2m-2}')(240k-24(m-1)-12^3).
\]
\end{theorem}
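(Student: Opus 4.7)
The plan is to mimic the proof of Theorem~\ref{melkein}, with the twist that now two of the polynomial coefficients $b_j$, rather than just one, depend on the number of short vectors. Writing the theta series as $\Theta = E_4^{3m+k} + \sum_{j=1}^m b_j E_4^{3(m-j)+k}\Delta^j$ and forcing the $q$-expansion to be $1 + \kappa_{2m-2} q^{2m-2} + \kappa_{2m} q^{2m} + O(q^{2m+2})$, the vanishing of the coefficients of $q^2,\dots,q^{2m-4}$ determines $b_1,\dots,b_{m-2}$ independently of $\kappa_{2m-2}$ and $\kappa_{2m}$. Using the expansion \eqref{amaar}, the coefficient of $q^{2m-2}$ then gives $b_{m-1} = \kappa_{2m-2} - (\text{terms in } b_1,\dots,b_{m-2})$, and the coefficient of $q^{2m}$ gives $b_m = \kappa_{2m} - b_{m-1}\,a_{k,m-1,1} - (\text{terms in } b_1,\dots,b_{m-2})$. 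Comparing $\Lambda$ and $\Lambda'$ therefore produces $b_{m-1}-b'_{m-1} = \kappa_{2m-2}-\kappa'_{2m-2}$ and $b_m - b'_m = (\kappa_{2m}-\kappa'_{2m}) - (\kappa_{2m-2}-\kappa'_{2m-2})\,a_{k,m-1,1}$.

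Next I would invoke the Belfiore--Sol\'e conjecture so that the secrecy gain is realised at $y=1$, where $\vartheta_2^4\vartheta_4^4/\vartheta_3^8$ equals $1/4$. Plugging into the formula used in the proof of Theorem~\ref{melkein} and subtracting, the only surviving summands are those with $j=m-1$ and $j=m$, giving
\[
\Xi_\Lambda^{-1}(1)-\Xi_{\Lambda'}^{-1}(1)
= \frac{b_{m-1}-b'_{m-1}}{256^{m-1}}\left(\tfrac34\right)^{3+k}\!\left(\tfrac14\right)^{2(m-1)}
+ \frac{b_m-b'_m}{256^m}\left(\tfrac34\right)^{k}\!\left(\tfrac14\right)^{2m}.
\]
After factoring out the common constant, the ratio of the two coefficients inside the bracket works out to $(3/4)^3\cdot 256\cdot 4^2 = 27\cdot 64 = 1728 = 12^3$, so the sign of $\Xi_\Lambda^{-1}(1)-\Xi_{\Lambda'}^{-1}(1)$ matches the sign of $1728\,(b_{m-1}-b'_{m-1}) + (b_m - b'_m)$. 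Substituting the earlier expressions for the $b$-differences rewrites this quantity as $(\kappa_{2m}-\kappa'_{2m}) + (\kappa_{2m-2}-\kappa'_{2m-2})\bigl(1728 - a_{k,m-1,1}\bigr)$, and requiring it to be negative is exactly the claimed inequality once $a_{k,m-1,1}$ is known.

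For the last ingredient, with $q=e^{\pi i\tau}$ the Eisenstein series expands as $E_4 = 1 + 240 q^2 + \cdots$ and the discriminant as $\Delta = q^2\bigl(1 - 24 q^2 + \cdots\bigr)$, so the $q^2$-coefficient of $q^{-2j}E_4^h\Delta^j$ is $240h - 24j$. Taking $h=k$ and $j=m-1$ yields $a_{k,m-1,1} = 240k - 24(m-1)$, and the bound $\kappa_{2m}-\kappa'_{2m} < (\kappa_{2m-2}-\kappa'_{2m-2})(240k - 24(m-1) - 12^3)$ drops out. I expect the main obstacle to be neither conceptual nor deep; rather, it will be the careful bookkeeping required to isolate exactly two $\kappa$-dependent coefficients from the triangular system (so that any lower-order cancellations really do occur), together with the arithmetic check that produces $12^3$ from the specific balance of $(3/4)$-powers, $(1/4)$-powers, and $256$-powers at $z = 1/4$ without sign or exponent slips.
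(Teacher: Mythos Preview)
Your proposal is correct and follows essentially the same route as the paper's own proof: both isolate $b_{m-1}$ and $b_m$ as the only coefficients depending on $\kappa_{2m-2},\kappa_{2m}$, take the difference $\Xi_\Lambda^{-1}(1)-\Xi_{\Lambda'}^{-1}(1)$ at $z=1/4$ under the Belfiore--Sol\'e conjecture, compute $a_{k,m-1,1}=240k-24(m-1)$ from the first $q$-coefficients of $E_4$ and $\Delta$, and simplify to the stated inequality. Your write-up is in fact slightly more explicit than the paper's in tracking how the constant $12^3$ arises from the ratio $256\cdot(3/4)^3\cdot 4^2$.
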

\begin{proof} Let us study again the system of equations
\[
\left\{\begin{array}{r} a_{3m+k,0,1}+b_1=0 \\
a_{3m+k,0,2}+b_1a_{3(m-1)+k,1,1}+b_2=0  \\
a_{3m+k,0,3}+b_1a_{3(m-1)+k,1,2}+b_2a_{3(m-2),2,1}+b_3=0  \\
\cdots \quad \quad \quad \cdots \\
a_{3m+k,0,m-2}+b_1a_{3(m-1)+k,1,m-3}+\cdots \quad \\+b_{m-3}a_{6+k,m-3,1}+b_{m-2}=0\\
a_{3m+k,0,m-1}+b_1a_{3(m-1)+k,1,m-2}+\cdots \quad \\+b_{m-2}a_{3+k,m-2,1}+b_{m-1}=k_{2m-2}\\
a_{3m+k,0,m}+b_1a_{3(m-1)+k,1,m-1}+\cdots \quad \\+b_{m-1}a_{k,m-1,1}+b_m=k_{2m} 
\end{array}\right.
\]
Depending on the lattice. $k_{2m}=\kappa_{2m}$ or $\kappa'_{2m}$ and $k_{2m-2}=\kappa_{2m-2}$ or $\kappa'_{2m-2}$. Furthermore, denote by $c_i$ and $c_i'$ the coefficients corresponding to the lattices $\Lambda$ and $\Lambda'$, respectively. Now $c_i=c_i'$ for $1\leq m-2$, $$c_{m-1}-c_{m-1}'=\kappa_{2m-2}-\kappa_{2m-2}'$$ and $$c_m-c_m'=\kappa_{2m}-\kappa_{2m}'-c_{m-1}a_{k,m-1,1}+c_{m-1}'a_{k,m-1,1}.$$ Hence,
\begin{multline*}
\Theta_{\Lambda}-\Theta_{\Lambda'}=(\kappa_{m-1}-\kappa_{m-1}')E_4^{3+k}\Delta^{m-1}+\\\left(\kappa_{2m}-\kappa_{2m}'-c_{m-1}a_{k,m-1,1}+c_{m-1}'a_{k,m-1,1}\right)E_4^k\Delta^m\\=(\kappa_{m-1}-\kappa_{m-1}')E_4^{3+k}\Delta^{m-1}+ \left(\kappa_{2m}-\kappa_{2m}'-(\kappa_{2m-2}-\kappa_{2m-2}')a_{k,m-1,1}\right)E_4^k\Delta^m.\end{multline*}
Now, it is clear that the secrecy gain of the lattice $\Lambda$ is greater than the secrecy gain of the lattice $\Lambda'$ if and only if
\begin{multline*}
\min \kappa_{2m-2}\frac{(1-z)^{3+k}z^{2(m-1)}}{256^{m-1}}+(\kappa_2m-\kappa_{2m-2}a_{k,m-1,1})\frac{(1-z)^kz^{2m}}{256^m}\\<\min \kappa_{2m-2}'\frac{(1-z)^{3+k}z^{2(m-1)}}{256^{m-1}}+(\kappa_2m'-\kappa_{2m-2}'a_{k,m-1,1})\frac{(1-z)^kz^{2m}}{256^m}.\end{multline*}
Assuming the conjecture by Belfiore and Sol\'e, this is equivalent to
\begin{multline}\label{ehto}
\kappa_{2m-2}\cdot 3^{3+k}\cdot 4^{-3-k-2(m-1)-4(m-1)}+(\kappa_{2m}-\kappa_{2m-2}a_{k,m-1,1})\cdot 3^{k}\cdot 4^{-k-2m-4m}\\<\kappa_{2m-2}'\cdot 3^{3+k}\cdot 4^{-3-k-2(m-1)-4(m-1)}+(\kappa_{2m}'-\kappa_{2m-2}'a_{k,m-1,1})\cdot 3^k\cdot 4^{-k-2m-4m}.
\end{multline}
From the definition of $a_{k,m-1,1}$ (see (\ref{amaar})), we obtain
\[
a_{k,m-1,1}=240k-24(m-1).
\]
Substituting this, and simplifying the inequality (\ref{ehto}), we obtain
\[
\kappa_{2m}-\kappa_{2m}'<(\kappa_{2m-2}-\kappa_{2m-2}')(240k-24(m-1)-12^3),
\]
which proves the claim.
\end{proof}

Let us now discuss the general case. Assuming that the latices do not have vectors of length $< 2\ell$, or that the length of the vectors is the same, the theta functions differ only by
\[
\sum_{j=\ell}^{m}b_jE_4^{3(m-j)}\Delta^j.
\]
Hence, it suffices to consider the behavior of the polynomial
\[
\sum_{j=\ell}^m\frac{b_j}{256^j}(1-z)^{3(m-j)+k}z^{2j}.
\]
To further simplify the problem, let us assume the conjecture by Belfiore and Sol\'e. This means that we need to consider expressions
\[
\sum_{j=\ell}^m b_j\cdot 3^{3(m-j)}4^{-3(m-j)-k-2j-4j}
\]
for different lattices. Clearly, it is easy to determine the coefficients $b_j$ for any given lattice. However, in the general case, the expressions for coefficients easily become very technical, and it is very difficult to say anything about their size, for instance, and therefore, it is very difficult to prove that lattices of certain type must be better than lattices of any other type. To further illustrate the problem, consider the expression
\[
E_4^h\Delta^j=q^{2j}+a_{h,j,1}q^{2j+2}+a_{h,j,2}q^{2j+4}+\cdots.
\]
First, if $j=0$, then $E_4^h\Delta^j=E_4^h$, i.e. a power of $E_4$. Since
\[
E_4=1+240\sum_{n\geq 1}\sigma_3(n)q^{2n},
\]
where $\sigma_3(n)=\sum_{\substack{d\mid n\\ d>0}}d^3$, we have
\[
An^3 \leq n^3\leq \sigma_3(n)\leq Bn^3
\]
for some constants $A$ and $B$ (see e.g. \cite{serre}, page 152, Proposition 9), ie. $\sigma_3(n)$ is approximately of the size $n^{3}$, and hence, writing
\[
E_4^h=1+\sum_{n\geq 1}f(n)q^{2n},
\]
we have
\[
f(n)=\sum_{\substack{\ell_1+\ell_2+\cdots \ell_t=n\\t\leq h}}240^t\sigma(\ell_1)\sigma(\ell_2)\cdots \sigma(\ell_t).
\]
Since all the sigmas are positive, there is no cancellation, and hence, we may conclude that the values of $f(n)$ get very large. Furthermore, recall that $\Delta$ is holomorphic cusp form of weight $12$, and therefore, $\Delta^jE_4^h$ is a holomorphic cusp form of weight $12j+4h$. Write
\[
\Delta^jE_4^h=\sum_{n\geq j}b(n)q^{2n}.
\]
By Deligne's result \cite{deligne}, we have $|b(n)|\ll n^{(12j+4h-1)/2}d(n)$. Of course, for many values of $n$, the value of $b(n)$ may be much smaller, although in general, this bound is good. Furthermore, by Rankin's result, \cite{rankin}, there are some positive constants $\delta_1$ and $\delta_2$ such that
\[
x(\log x)^{-\delta_1}\ll\sum_{n=1}^x |\tau(n)n^{-(\kappa-1)/2}|\ll x(\log x)^{-\delta_2},
\]
where $\kappa$ is the weight of the form. Hence, in average, the Fourier coefficients $\tau(n)$ of the $\Delta$ function are in average of the size $n^{(11-1)/2}(\log n)^{-\delta}$ for some suitable value/values of $\delta$. So, even though there is some cancellation, it is pretty clear that the coefficients $a_{h,j,n}$ may get very large by the estimates above, and therefore, it is very difficult, or perhaps even impossible to derive any general principles for the dependence of the secrecy gain on the kissing number, or the number of the shortest vectors. Of course, one may prove similar theorems to Theorem \ref{2m-2}, but one would need much more information or some considerably more sophisticated methods than the ones presented in this article to derive any conclusions whether lattices with greatest kissing numbers or longest shortest vectors are better than other lattices, for instance.

Finally, let us briefly consider odd lattices. One may prove the following theorem:
\begin{theorem}\label{melkein_pariton} Let $\Lambda$ be an odd unimodular lattice in the dimension $n$ with the shortest vector length $\left\lfloor \frac{n}{8}\right\rfloor$. Let the number of vectors of length $\left\lfloor \frac{n}{8}\right\rfloor$ be $h$. Then, when $h$ decreases, the secrecy gain increases. Furthermore, assuming the conjecture by Belfiore and Sol\'e, if lattices $\Lambda$ and $\Lambda'$ have $h$ and $h'$ vectors of length $\left\lfloor \frac{n}{8}\right\rfloor$, respectively, then the difference of the inverses of the secrecy gains is $\frac{h-h'}{4^{5\left\lfloor \frac{n}{8}\right\rfloor}}.$
\end{theorem}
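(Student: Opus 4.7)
The plan is to prove Theorem~\ref{melkein_pariton} by the same strategy as Theorem~\ref{melkein}, only now using the general unimodular representation $\Theta_\Lambda = \sum_{r=0}^{\mu} a_r \vartheta_3^{n-8r}\Delta_8^r$ with $\mu = \left\lfloor n/8 \right\rfloor$, which is available for every unimodular lattice by the discussion in Section~2.

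First I would solve the triangular system for the coefficients $a_r$. From the product definitions $\vartheta_2^4 = 16q + O(q^2)$, $\vartheta_3 = 1 + O(q)$, and $\vartheta_4 = 1 + O(q)$ one reads off $\Delta_8 = q + O(q^2)$, and hence $\vartheta_3^{n-8r}\Delta_8^r = q^r + O(q^{r+1})$. The hypothesis that the shortest vector length equals $\mu$ and that there are $h$ vectors of that length is exactly the right number of linear equations,
\[
[q^0]\Theta_\Lambda = 1,\quad [q^j]\Theta_\Lambda = 0 \ \ (1 \le j \le \mu-1),\quad [q^\mu]\Theta_\Lambda = h,
\]
which, by strict triangularity, determine $a_0 = 1$ and $a_1,\ldots,a_{\mu-1}$ as explicit functions of $n$ alone (independent of $h$), while
\[
a_\mu = h - C(n)
\]
for a constant $C(n)$ depending only on the dimension.

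Next I would compare two lattices $\Lambda,\Lambda'$ with $h$ and $h'$ vectors of length $\mu$ respectively. By the previous step $a_r = a_r'$ for $r < \mu$ and $a_\mu - a_\mu' = h - h'$. Dividing by $\vartheta_3^n$ and writing $z(y) = \vartheta_2^4(iy)\vartheta_4^4(iy)/\vartheta_3^8(iy)$ one obtains
\[
\Xi_\Lambda^{-1}(y) = \sum_{r=0}^{\mu} \frac{a_r}{16^r} \, z(y)^r,
\]
so that
\[
\Xi_\Lambda^{-1}(y) - \Xi_{\Lambda'}^{-1}(y) = \frac{h-h'}{16^\mu}\, z(y)^\mu .
\]
Since $z(y) > 0$ for all $y > 0$, the assumption $h < h'$ forces $\Xi_\Lambda^{-1}(y) < \Xi_{\Lambda'}^{-1}(y)$ pointwise, hence $\Xi_\Lambda(y) > \Xi_{\Lambda'}(y)$ and therefore $\chi_\Lambda > \chi_{\Lambda'}$; this proves the first, unconditional, part.

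Finally, under the Belfiore--Sol\'e conjecture the maximum of $\Xi_\Lambda$ is attained at $y = 1$, where $z(1) = 1/4$. Substituting this value into the displayed difference yields the claimed closed form for $\chi_\Lambda^{-1} - \chi_{\Lambda'}^{-1}$ as $(h-h')$ times a fixed power of $4$ in the exponent $\mu = \lfloor n/8\rfloor$. The only step requiring care is the strict triangularity of the system, i.e.\ the fact that the $q^r$-coefficient of $\vartheta_3^{n-8r}\Delta_8^r$ equals exactly $1$, which is what guarantees independence of $a_0,\ldots,a_{\mu-1}$ from $h$; beyond this the argument parallels the proof of Theorem~\ref{melkein} line by line, so no substantial obstacle is expected.
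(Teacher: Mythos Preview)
Your approach is essentially identical to the paper's: write $\Theta_\Lambda$ in the $\vartheta_3^{n-8r}\Delta_8^r$ basis, observe the triangular system forces $a_0,\dots,a_{\mu-1}$ to depend only on $n$ while $a_\mu-a_\mu'=h-h'$, then compare the inverse secrecy functions termwise and finally substitute $z(1)=1/4$ under the Belfiore--Sol\'e conjecture. One small caution: when you actually carry out the last substitution you get $\dfrac{h-h'}{16^{\mu}}\cdot\left(\dfrac{1}{4}\right)^{\mu}=\dfrac{h-h'}{4^{3\mu}}$, not $4^{5\mu}$ as stated---the paper's own proof contains the same arithmetic slip, so your method is right even though the displayed exponent in the statement is not.
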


However, since Rains and Sloane \cite{rainssloane} have proved that in dimension $n$ the maximal possible minimal norm is $2\left\lfloor\frac{n}{24}\right\rfloor+2$, also on odd unimodular lattices unless $n=23$, when the maximal minimal norm is $3$, this theorem does not give much information: We can easily derive from the inequality
\[
\left\lfloor\frac{n}{8}\right\rfloor\leq 2\left\lfloor\frac{n}{24}\right\rfloor+2
\]
that $n\leq 55$ (actually, one could exclude also some other values of $n$, either by treating the inequality carefully or by Gaulter's result \cite{gaulter} stating that a lattice in dimension $24m$ meeting the bound $2m+2$ must be even).

Furthermore, one could formulate a similar corollary as in the case of even lattices. However, since all the lattices with shortest vector length $\left\lfloor \frac{n}{8}\right\rfloor +1$ are known, this doesn't give any new information compared to \cite{oggierlin}.

\begin{proof}
Since the proof of Theorem \ref{melkein_pariton} is similar to the proof of Theorem \ref{melkein}, we will only sketch the proof to point the differences.

Since the theta function of any unimodular lattice has the polynomial representation
\[
\Theta_{\Lambda}=\sum_{r=0}^{\left\lfloor \frac{n}{8}\right\rfloor}a_r\vartheta_3^{n-8r}\Delta_8^r,
\]
writing
\[
\vartheta_3^r\Delta_8^s=q^s+c_{r,s,1}q^{s+1}+c_{r,s,2}q^{s+2}+\cdots,
\]
we get the system of equations
\[
\left\{\begin{array}{r}a_0=1\\ a_0c_{n,0,1}+a_1=0\\\cdots\quad\cdots\\a_0c_{n,0,\left\lfloor\frac{n}{8}\right\rfloor}+a_1c_{n-8,1,\left\lfloor\frac{n}{8}\right\rfloor-1}+\cdots+a_{\left\lfloor\frac{n}{8}\right\rfloor}=h.\end{array}\right.
\]
We may now proceed just like in the proof of the previous theorem. To prove the second part of the theorem, we work just like in the proof of the previous theorem. We use the polynomial expression for the inverse of the secrecy function:
\[
\Xi_{\Lambda}^{-1}=\sum_{r=0}^{\mu}\frac{a_r}{16^r}\frac{\vartheta_2^{4r}\vartheta_4^{4r}}{\vartheta_3^{8r}}
\]
and notice that in the difference between the inverses, only the last terms in the polynomials remain. Their difference is
\[
\frac{(h-h')}{16^{\left\lfloor\frac{n}{8}\right\rfloor}}\cdot \left(\frac{\vartheta_2^{4}\vartheta_4^{4}}{\vartheta_3^{8}}\right)^{\left\lfloor\frac{n}{8}\right\rfloor}.
\]
If the conjecture by Belfiore and Sol\'e holds, then the maximum is obtained at $y=1$, i.e. $\frac{\vartheta_2^{4}\vartheta_4^{4}}{\vartheta_3^{8}}=\frac{1}{4}$, then the difference becomes
\[
\frac{h-h'}{4^{5\left\lfloor\frac{n}{8}\right\rfloor}},
\]
which completes the proof.
\end{proof}

\section{Same kissing numbers, different secrecy gains}
We will show that the extremal even unimodular lattice in dimension $40$ with shortest vector length $4$ and kissing number $39600$ has a different secrecy gain than the odd unimodular lattice in dimension $40$ with the same shortest vector length and kissing number. This lattice is also extremal, in the sense that it has the longest shortest vectors.

It was showed in \cite{oma:ieee} that the $40$-dimensional even unimodular lattice satisfies the secrecy gain conjecture. Let us use the methods from there to find the actual value of the secrecy gain.  The extremal even unimodular lattices in dimension $40$ have theta series are of the form
\[
1+39600q^4+\cdots.
\]
The theta function of the lattice can also be written as
\[
E_4^5-1200E_4^2\Delta,
\]
and therefore, the secrecy gain is the maximal value of the function $\left((1-z)^5-\frac{75}{16}z^2(1-z)^2\right)^{-1}$ on the interval $\left(0,\frac{1}{4}\right]$, which is obtained at $z=\frac{1}{4}$, and this value is $\frac{4096}{297}$.

Consider now the odd unimodular lattices in dimension $40$ with theta series of the form
\[
1+39600q^4+1048576q^5+\cdots.
\]
More on these odd lattices can be found at \cite{harada}. Comparing coefficients, we see that if the theta function of a unimodular lattice is of the form
\[
1+39600q^4+1048576q^5,
\]
then the theta function can be represented as (see \cite{harada})
\[
\vartheta_3^{40}-80\vartheta_3^{32}\Delta_8+1360\vartheta_3^{24}\Delta_8^2-2560\vartheta_3^{16}\Delta_8^3+20480\vartheta_3^8\Delta_8^4.
\]
Using the method from \cite{oma:ieee}, we'll see that it is sufficient to show that the polynomial
\[
1-5z+\frac{1360}{16^2}z^2-\frac{2560}{16^3}z^3+\frac{20480}{16^4}z^4
\]
obtains its minimal value on the interval $\left(0,\frac{1}{4}\right]$ at $z=\frac{1}{4}$, and then to compute the inverse of this value.

To show that the polynomial obtains its minimum at $z=\frac{1}{4}$, let us first differentiate it. The derivative is
\begin{multline*}
\frac{5}{8}\left(2z^3-3z^2+17z-8\right)\leq \frac{5}{8}\left(2z^3+17z-8\right)\\\leq \frac{5}{8}\left(\frac{2}{64}+\frac{17}{4}-8\right)<0
\end{multline*}
on the interval $z\in \left(0,\frac{1}{4}\right]$. Hence, the polynomial obtains its minimum at $z=\frac{1}{4}$, and this value is $\frac{301}{4096}$. Hence, the secrecy gain is $\frac{4096}{301}$.

\section{Connection to $n-8k$ lattices}

The following results are probably mostly of theoretical interest. It ties together the question of $n-8k$-lattices and the conjecture by Belfiore and Sol\'e. Namely, it is conjectured \cite{elkies} that the dimension of $n-8k$ lattices is bounded. In the following, we will first consider the $n-24$-lattices, and we have to assume that the lattice has no roots (i.e. no vectors of norm $1$ or $2$). Further assuming the conjecture by Belfiore and Sol\'e, we are able to improve the bound by Gaulter \cite{gaulter:characteristic} that says that all the $n-24$-lattices without vectors of norm $1$, have the dimension 8388630 or smaller. On the other hand, this result implies that if there are $n-24$ lattices without roots in dimensions $ n\leq 8388630$, then the conjecture by Belfiore and Sol\'e fails. The next result is the general proof that the dimension of $n-8k$ lattices with no vectors of length less than $k$ is bounded. This is an improvement of a result by Gaulter which he shows in his thesis \cite{gaulter:thesis}, ie. that the dimension of $n-8k$ lattices with no vectors of length less than $k+1$ is bounded. Finally, we give some examples, i.e., treat the values of $k=4,5$.

Let us first recall the definition of a characteristic vector, and then move to the definition of an $n-8k$-lattice:
\begin{definition} Vector $w\in \Lambda$ ($\Lambda$ is a lattice) is called a characteristic vector if
\[
(w,v)\equiv (v,v)\ (\bmod\ 2),
\]
for all $v\in \Lambda$, where $(,)$ is the inner product
\end{definition}
\begin{definition}
Lattice is called en $n-8k$-lattice if all its characteristic vectors are of length at least $n-8k$.
\end{definition}

Elkies proved that all $n$-lattices must be of form $\mathbb{Z^{\ell}}\oplus \Lambda$ \cite{elkies:n}. Also $n-8$ and $n-16$ lattices are well-known. For a brief history and recollection of results, see e.g. \cite{gaulter:represent}.

\begin{theorem} If the conjecture by Belfiore and Sol\'e holds, then there are no $n-24$-lattices in dimensions $23171 \leq n\leq 8388630$ without roots, ie. without vectors of length $1$ or $2$.
\end{theorem}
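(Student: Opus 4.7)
The plan is to show that a hypothetical $n-24$-lattice $\Lambda$ without roots imposes, via the Belfiore-Sol\'e conjecture, a polynomial inequality in $n$ that is violated in the stated range. First, I would write the theta function in polynomial form
\[
\Theta_\Lambda = \sum_{r=0}^{\mu} a_r \vartheta_3^{n-8r}\Delta_8^r, \qquad \mu = \lfloor n/8 \rfloor.
\]
The normalization gives $a_0 = 1$; the absence of vectors of squared length $1$ and $2$ furnishes two linear equations determining $a_1$ and $a_2$ as explicit polynomials in $n$; and the $n-24$ hypothesis --- that every characteristic vector of $\Lambda$ has squared length at least $n - 24$ --- translates, via the parallel polynomial representation of the shadow theta function, into further linear equations fixing $a_3$ as well.

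Next, I would invoke the Belfiore-Sol\'e conjecture. In the variable $z = \vartheta_2^4\vartheta_4^4/\vartheta_3^8$ the inverse secrecy function is the polynomial
\[
\Xi_\Lambda^{-1}(z) = \sum_{r=0}^{\mu} \frac{a_r}{16^r}\, z^r,
\]
which the conjecture asserts attains its minimum on $(0, 1/4]$ at $z = 1/4$. Two useful consequences at the endpoint are the limit inequality $\Xi_\Lambda^{-1}(0) = 1 \geq \Xi_\Lambda^{-1}(1/4)$, i.e.\ $\sum_{r \geq 1} a_r/64^r \leq 0$, and the first-order condition $\frac{d}{dz}\Xi_\Lambda^{-1}(z)\big|_{z=1/4} \leq 0$. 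Splitting such an inequality into the determined part ($r \leq 3$) and the tail ($r \geq 4$), and using the non-negativity of the short-vector counts $N_3, N_4, \ldots$ (each a linear function of the $a_r$) to control the tail, one reduces it to an inequality in $a_0, a_1, a_2, a_3$ only.

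Substituting the closed-form expressions for $a_1, a_2, a_3$ into this inequality then yields a polynomial inequality in $n$; combined with the unconditional Gaulter bound $n \leq 8388630$, its range of failure should be exactly $23171 \leq n \leq 8388630$. The principal obstacle is controlling the tail cleanly, since the $a_r$ with $r \geq 4$ are not fixed by the hypotheses: one must either eliminate them using positivity of short-vector counts or choose the right consequence of Belfiore-Sol\'e that isolates $a_0, \ldots, a_3$. The shadow-theta bookkeeping in the first step, and matching the numerical bound $23171$ exactly, are secondary but nontrivial difficulties.
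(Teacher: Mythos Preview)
There is a genuine gap in the first step: the $n-24$ hypothesis does \emph{not} determine $a_3$. What the shadow constraints (equivalently, the Nebe--Venkov result cited in the paper) actually give you is that $a_r=0$ for all $r\geq 4$; in other words, your ``tail'' is identically zero and the theta series is exactly
\[
\Theta_\Lambda=\vartheta_3^n+a_1\vartheta_3^{n-8}\Delta_8+a_2\vartheta_3^{n-16}\Delta_8^2+a_3\vartheta_3^{n-24}\Delta_8^3.
\]
Normalization and the no-roots condition then fix $a_1=-2n$ and $a_2=2n^2-46n$, but $a_3$ remains a free parameter (it encodes the number of vectors of norm $3$). Consequently, the Belfiore--Sol\'e inequality $\Xi_\Lambda^{-1}(1/4)\leq 1$ does not reduce to a polynomial inequality in $n$ alone; it only gives an upper bound on $a_3$ of the form $a_3\leq -128n^2+O(n)$.

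The ingredient you are missing is Gaulter's lemma that the number of shortest characteristic vectors of a unimodular lattice is at most $2^n$, combined with the identity $|\chi_{n-24}|=\lambda_3\,2^{n-24}$, where $\lambda_3$ is the coefficient in the $(\vartheta_3,E_4)$-expansion and satisfies $\lambda_3=-a_3/16^3$. This yields $-a_3\leq 2^{36}$. Pairing this with the Belfiore--Sol\'e bound $-a_3\geq 128n^2-O(n)$ gives $128n^2-O(n)\leq 2^{36}$, whence $n\leq 23171$. So the Belfiore--Sol\'e inequality you isolated is the right one, but it must be played off against Gaulter's characteristic-vector count, not against positivity of the $N_j$; your derivative condition and tail-control ideas are not needed, since the tail vanishes outright.
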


\begin{proof} We are going to use the method from \cite{gaulter:characteristic}. First recall (for the argument, see e.g. \cite{nebevenkov}, Section 2) that the theta series of an $n-24$ lattice may can be written as
\begin{equation}\label{poly}
\vartheta_3^n+A\vartheta_3^{n-8}\Delta_8+B\vartheta_3^{n-16}\Delta^2+C\vartheta_3^{n-24}\Delta^3.
\end{equation}
We since the lattice has no roots, we have $A=-2n$ and $B=2n^2-46n$. Next we need some bounds for the value of $C$. Consider the secrecy function
\[
\left(1-\frac{2n}{16}z+\frac{2n^2-46n}{16^2}z^2+\frac{C}{16^3}z^3\right)^{-1}.
\]
First of all, as a ratio of theta functions, the secrecy function must be positive. Hence, in particular,
\[1-\frac{2n}{16}\cdot \frac{1}{4}+\frac{2n^2-46n}{16^2}\cdot \frac{1}{4^2}+\frac{C}{16^3}\cdot \frac{1}{4^3}>0,
\]
and hence,
\[
C>-64^3+2n\cdot 64^2-64(2n^2-46)=-128n^2+64\cdot 175n-64^3.
\]
On the other hand, if the conjecture by Belfiore and Sole holds, it is clear that the value of the secrecy function at $z=\frac{1}{4}$ must be at least as large as the value of the secrecy function at $z=0$. Hence,
\[
1-\frac{2n}{16}\cdot \frac{1}{4}+\frac{2n^2-46n}{16^2}\cdot \frac{1}{4^2}+\frac{C}{16^3}\cdot \frac{1}{4^3}\leq 1.
\] 
Hence,
\[
C\leq -128n^2+174n.
\]
We obtain $C=-128n^2+174n-h$ with $h\in [0,-64^3[$. Gaulter \cite{gaulter:characteristic}, Corollary 3.3 has proved the following result:
\begin{lemma}\label{gaul}
The number of shortest characteristic vectors of a positive definite unimodular $\mathbb{Z}$-lattice of dimension $n$ is at most $2^n$.
\end{lemma}
Furthermore, he has proved that
\begin{equation}\label{ehto}
|\chi_{n-8k}|=\lambda_{\ell}2^{n-8k},
\end{equation}
where $n-8k$ is the length of the shortest characteristic vectors, $\chi_{n-8k}$ is the number of vectors of that length, and $\lambda_{\ell}$ is defined in the following way: The theta-function of the lattice is written as a polynomial
\[
\sum_{\ell=0}^{\left\lfloor\frac{n}{8}\right\rfloor}\lambda_{\ell}\vartheta_3^{n-\ell}E_4^{\ell}.
\]
Since
\[
\Delta_8=\frac{1}{16}\left(\vartheta_3^8-E_4\right),
\]
we may replace the polynomial representation \eqref{poly} by the following representation:
\begin{multline*}
\vartheta_3^n+A\vartheta_3^{n-8}\Delta_8+B\vartheta_3^{n-16}\Delta^2+C\vartheta_3^{n-24}\Delta^3\\=\vartheta_3^n+\frac{A}{16}\vartheta_3^{n-8}(\vartheta_3^8-E_4)+\frac{B}{16^2}\left(\vartheta_3^8-E_4\right)^2+\frac{C}{16}\left(\vartheta_3^8-E_4\right)^3.\end{multline*}
From this we obtain $\lambda_3=-\frac{C}{16^3}$. By \eqref{ehto} and Lemma \ref{gaul}, we have
\[
2^{n}\geq |\chi_{n-24}|=-\frac{C}{16^3}\cdot 2^{n-24},
\]
and hence
\[
2^{36}\geq 128n^2-174n+h,
\]
and in particular,
\[
2^{36}\geq 128n^2-174n,
\]
and hence,
\[
n\leq 23171.
\]

\end{proof}
\begin{theorem} If the conjecture by Belfiore and Sol\'e holds, then the dimension $n$ of $n-8k$ lattices with no vectors shorter than $k$ is bounded, ie. for any $k$ there is an $N_k$ such that there are no $n-8k$ lattices in dimensions $n>N_k$.
\end{theorem}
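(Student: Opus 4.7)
The plan is to generalise the preceding theorem's strategy, which handled $k=3$, to arbitrary $k$. First, invoke the standard fact (see e.g.~\cite{nebevenkov}, Section~2) that the theta series of an $n-8k$ lattice admits the truncated polynomial representation
\[
\Theta_\Lambda = \sum_{j=0}^k c_j\,\vartheta_3^{n-8j}\Delta_8^j,\qquad c_0=1.
\]
Since $\vartheta_3^{n-8j}\Delta_8^j = q^j+O(q^{j+1})$, the hypothesis that the $q$-expansion of $\Theta_\Lambda$ begins $1+O(q^k)$ becomes a triangular linear system determining $c_1,\dots,c_{k-1}$ one at a time as explicit polynomials in $n$ of degree exactly $j$ (generalising $c_1=-2n$ and $c_2=2n^2-46n$). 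The coefficient $c_k$ remains as the only free parameter.

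Next, I would sandwich $c_k$ between two polynomial functions of $n$. Evaluating the inverse secrecy function at $y=1$, where $\vartheta_2^4\vartheta_4^4/\vartheta_3^8=1/4$ and hence $\Delta_8/\vartheta_3^8=1/64$, gives
\[
\Xi_\Lambda^{-1}(1) = \sum_{j=0}^k \frac{c_j}{64^j}.
\]
Positivity of $\Theta_\Lambda(1)$ forces $\Xi_\Lambda^{-1}(1)>0$, while the Belfiore--Sol\'e conjecture, combined with $\Xi_\Lambda(y)\to 1$ as $y\to\infty$, gives $\Xi_\Lambda^{-1}(1)\le 1$. Isolating $c_k$ produces the sandwich
\[
-64^k-\sum_{j=1}^{k-1}c_j\,64^{k-j} \;<\; c_k \;\le\; -\sum_{j=1}^{k-1}c_j\,64^{k-j},
\]
an interval of width $64^k$ whose centre is a polynomial of degree $k-1$ in $n$, dominated by the term $-64\,c_{k-1}$.

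The final step would then mirror the $k=3$ argument exactly. Substituting $\Delta_8=\tfrac{1}{16}(\vartheta_3^8-E_4)$ in the polynomial representation, the leading $E_4$-coefficient satisfies $\lambda_k=(-1)^k c_k/16^k$; the formula $|\chi_{n-8k}|=\lambda_k 2^{n-8k}$ together with Lemma~\ref{gaul} then forces
\[
|c_k|\;\le\;16^k\cdot 2^{8k}\;=\;2^{12k},
\]
a bound independent of $n$. Comparing this $n$-free estimate with the $n$-dependent sandwich above produces an explicit $N_k$ beyond which no such lattice can exist.

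The main obstacle, and the only place where genuine work is required beyond what the $k=3$ calculation did by hand, is confirming that the sandwich on $c_k$ actually forces $|c_k|\to\infty$ as $n\to\infty$. This reduces to checking that $-64\,c_{k-1}$ has nonvanishing leading term in $n$. Unwinding the recursion, the leading coefficient of $c_j$ matches the $q^j$-coefficient of $1/\vartheta_3^{n}=(1+2q+O(q^4))^{-n}$, giving the nonzero leading term $(-2)^j n^j/j!$. Depending on the parity of $k$, either the Belfiore--Sol\'e inequality (odd $k$) or the positivity inequality (even $k$) is the one that produces the decisive growth; the other merely confines $c_k$ to a bounded window around it.
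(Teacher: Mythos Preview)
Your proposal is correct and follows essentially the same route as the paper: write $\Theta_\Lambda$ in the truncated $\vartheta_3,\Delta_8$ basis, solve the triangular system for $c_1,\dots,c_{k-1}$ as polynomials in $n$, sandwich $c_k$ between two polynomials using positivity of $\Theta_\Lambda$ and the Belfiore--Sol\'e inequality at $y=1$, then contradict this with Gaulter's bound $|c_k|\le 2^{12k}$ via $\lambda_k=(-1)^k c_k/16^k$.

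The one place you go beyond the paper is the last paragraph: the paper simply asserts that ``$b_k$ must be polynomial in $n$'' and concludes with ``polynomials are not bounded,'' without checking that the relevant polynomial has positive degree. Your computation of the leading term of $c_j$ as $(-2)^j n^j/j!$ (via the $q$-expansion of $\vartheta_3^{-n}$, noting that $\Delta_8/\vartheta_3^8$ is $n$-independent so lower $c_i$ contribute only lower-order terms) fills that gap cleanly and makes the argument airtight.
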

\begin{proof}
The theta function of the lattice can be written as
\[
\vartheta_3^n+\sum_{\ell=1}^k b_{\ell}\vartheta_3^{n-8\ell}\Delta_8^{\ell}.
\]
The coefficients $b_{\ell}$ for $1\leq \ell\leq k-1$ are determined by the fact that there are no vectors of corresponding length. Furthermore, looking at the equations that determine the coefficients
\[
\left\{\begin{array}{r} c_{n,0,1}+b_1=0\\ c_{n,0,2}+b_1c_{n-8,1,1}+b_2=0\\\cdots\quad\cdots\\c_{n,0,k-1}+b_1c_{n-8,1,k-2}+\cdots+b_{k-1}=0.\end{array}\right.
\]
Furthermore, notice that the coefficients $c_{\ell,j,h}$ are polynomials in $n$. Clearly, they can be exponential in $j$ and $h$, but since we are interested in asymptotics in $n$, and hence we may assume $k$ to be a constant, and now also $j$ and $h$ can be treated as constants, or being below some constant bound, the dependence on $n$ is polynomial.

Consider now the inequalities
\[
1-\sum_{\ell=1}^k\frac{b_{\ell}}{16^{\ell}}\cdot\frac{1}{4^{\ell}}>0
\]
(positivity of the theta function) and
\[
1-\sum_{\ell=1}^k\frac{b_{\ell}}{16^{\ell}}\cdot\frac{1}{4^{\ell}}\leq1
\]
(follows from the conjecture by Belfiore and Sol\'e). From these we see that also $b_k$ must be polynomial in $n$.

Use now again Gaulter's results stating that the number of shortest characteristic vectors is at most $2^n$, and \eqref{ehto}. We need value of $\lambda_k$. By substituting $\Delta_8=\frac{1}{16}\left(\vartheta_3^8-E_4\right)$, we have
\[
\lambda_k=\pm \frac{b_k}{16^k},
\]
and hence, also $\lambda_k$ is polynomial in $n$. Finally, consider the equation \eqref{ehto}
\[
\chi_{n-8k}=\lambda_k2^{n-8k}.
\]
If $\lambda_k$ is negative for values of $n>M_k$ for some $M_k$, then this equation can not hold: the number of vectors cannot be negative, and then we may choose $N_k=M_k$. On the other hand, if $\lambda_k$, is positive, and it's polynomial, then it's increasing. However, we have
\[
2^n\geq \lambda_k2^{n-8k},
\] 
ie.
\[
2^{8k+4k}\geq |b_k|,
\]
which is a contradiction: polynomials are not bounded.
\end{proof}
\begin{theorem}
Assume the Belfiore Sol\'e conjecture. Then there are no $n-32$ lattices without vectors of norm $\leq 3$ in dimensions $\geq 14941$.
\end{theorem}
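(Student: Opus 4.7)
The plan is to follow the argument of the preceding $n-24$ theorem with $k=4$. Write the theta function of an $n-32$-lattice in the form
\[
\Theta = \vartheta_3^n + \sum_{\ell=1}^{4} b_\ell\, \vartheta_3^{n-8\ell}\Delta_8^\ell.
\]
The assumption that the lattice has no vectors of norm $\leq 3$ forces the coefficients of $q, q^2$ and $q^3$ in $\Theta$ to vanish, giving a triangular system that determines $b_1, b_2, b_3$ as polynomials in $n$. Using $\vartheta_3^m = 1 + 2mq + 2m(m-1)q^2 + \tfrac{4m(m-1)(m-2)}{3}q^3 + \cdots$ and the expansion $\Delta_8 = q - 8q^2 + 28q^3 - \cdots$, one finds
\[
b_1 = -2n,\qquad b_2 = 2n^2 - 46n,\qquad b_3 = -\tfrac{4n(n-1)(n-2)}{3} + 88n^2 - 1608n.
\]

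The remaining coefficient $b_4$ is now squeezed between three inequalities evaluated at $z = \vartheta_2^4\vartheta_4^4/\vartheta_3^8 = \tfrac{1}{4}$ (that is, at $y=1$). Positivity of $\Theta_\Lambda$ gives the strict lower bound $1 + \sum_{\ell=1}^{4} b_\ell/64^\ell > 0$; the Belfiore-Sol\'e conjecture gives the upper bound $1 + \sum_{\ell=1}^{4} b_\ell/64^\ell \leq 1$; and expanding $\Delta_8 = \tfrac{1}{16}(\vartheta_3^8 - E_4)$, as in the preceding proof, identifies $\lambda_4 = b_4/16^4$, whence Lemma~\ref{gaul} together with Gaulter's identity $\chi_{n-8k} = \lambda_k\, 2^{n-8k}$ yields $0 \leq b_4 \leq 2^{48}$. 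Comparing the positivity lower bound $b_4 > -64^4\bigl(1 + \sum_{\ell=1}^{3} b_\ell/64^\ell\bigr)$ with Gaulter's upper bound $b_4 \leq 2^{48}$, and substituting the explicit expressions for $b_1, b_2, b_3$, converts compatibility into the cubic polynomial inequality
\[
\tfrac{256}{3}n^3 - 14080\, n^2 + \tfrac{2447360}{3}\, n \;<\; 2^{48} + 2^{24}.
\]

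A direct numerical comparison shows that this inequality first fails precisely at $n = 14941$, yielding the claimed bound. The main (purely arithmetic) obstacle is the computation of $b_3$: it requires the coefficients of $\Delta_8$ and $\Delta_8^2$ through order $q^3$, followed by careful bookkeeping in the triangular system. Everything afterward reduces to a straightforward comparison of a cubic polynomial with the constant $2^{48} + 2^{24}$.
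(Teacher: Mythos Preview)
Your argument is correct and follows the paper's approach essentially step for step: compute $b_1,b_2,b_3$ from the vanishing of the $q,q^2,q^3$ coefficients (your $b_3$ agrees with the paper's, once one fixes the obvious typo there), bound $b_4$ using positivity of $\Theta_\Lambda$ at $y=1$ together with Gaulter's count of short characteristic vectors, and reduce to a cubic inequality in $n$ that first fails at $n=14941$.  Your inequality $\tfrac{256}{3}n^3-14080n^2+\tfrac{2447360}{3}n<2^{48}+2^{24}$ is exactly what the paper's terse ``$2^{48}\ge U-s$ with $s\in[0,64^4)$'' unpacks to, and the numerical cutoff is the same.

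One remark worth making: in your final comparison you only use the positivity lower bound $b_4>U_4-64^4$ against Gaulter's upper bound $b_4\le 2^{48}$; the Belfiore--Sol\'e upper bound $b_4\le U_4$ is stated but never actually invoked.  Since $\lambda_4=+b_4/16^4$ (even $k$), it is the positivity side that is binding here, so your argument in fact goes through \emph{without} assuming the conjecture.  This is a small sharpening over the paper's presentation, which packages both bounds together; for odd $k$ (as in the $n-24$ and $n-40$ cases) the sign of $\lambda_k$ flips and the Belfiore--Sol\'e bound becomes the essential one.
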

\begin{proof} We proceed precisely as in the proof of the previous (general) theorem. The theta function of the lattice can be written as
\[
\vartheta^n+\sum_{\ell=1}^4 b_{\ell}\vartheta^{n-8\ell}\Delta_8^{\ell}
\]
with
\begin{alignat*}{1}
b_1&=-2n\\
b_2&=2n^2-46n\\
b_3&=-\frac{4}{3}+92n^2-\frac{4832}{3}n.
\end{alignat*}
Hence
\[
b_4=64^3\cdot 2n-64^2(2n^2-46n)+64\left(\frac{4}{3}n^3-92n^2+\frac{4832}{3}n\right)-s
\]
with $s\in [0,64^4[$. Now
\[
\lambda_4=\frac{b_4}{16^4}.
\]
For any lattice, the following inequality must hold
\[
2^n\geq \lambda_42^{n-32},
\]
ie.
\[
2^{48}\geq 64^3\cdot 2n-64^2(2n^2-46n)+64\left(\frac{4}{3}n^3-92n^2+\frac{4832}{3}n\right)-s\]
This holds only when $n\leq 14940$.
\end{proof}
\begin{theorem}
Assume the Belfiore Sol\'e conjecture. Then there are no $n-40$ lattices without vectors of norm $\leq 3$ in dimensions $\geq 12885$.
\end{theorem}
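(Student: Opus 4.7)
The plan is to adapt the argument of the preceding theorem, lifting $k$ from $4$ to $5$. First, write the theta function of an $n-40$ lattice as
\[
\Theta_{\Lambda} \;=\; \vartheta_{3}^{\,n} + \sum_{\ell=1}^{5} b_\ell\,\vartheta_{3}^{\,n-8\ell}\Delta_{8}^{\,\ell}.
\]
Absence of vectors of norm $\le 3$ determines, as in the earlier proofs,
\[
b_1 = -2n,\qquad b_2 = 2n^2-46n,\qquad b_3 = -\tfrac{4}{3}n^{3}+92n^{2}-\tfrac{4832}{3}n,
\]
leaving $b_4$ and $b_5$ to be controlled. At $y=1$, i.e.\ at $z=\vartheta_{2}^{4}\vartheta_{4}^{4}/\vartheta_{3}^{8}=\tfrac14$, positivity of $\Theta_\Lambda$ together with the Belfiore-Sol\'e conjecture give, precisely as in the proof of the general $n-8k$ theorem,
\[
b_5 \;=\; -\sum_{\ell=1}^{4} b_\ell\, 64^{\,5-\ell} - s,\qquad s\in[0,\,64^{5}).
\]
Rewriting in the $\vartheta_3,E_4$ basis via $\Delta_8 = \tfrac{1}{16}(\vartheta_3^{8}-E_4)$ yields $\lambda_5 = -b_5/16^{5}$, so Lemma \ref{gaul} and the identity (\ref{ehto}) applied to characteristic vectors of length $n-40$ force
\[
\lambda_5\cdot 2^{\,n-40} \;\le\; 2^{\,n},\qquad\text{equivalently,}\qquad -b_5 \;\le\; 2^{60}.
\]

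To control the still-free coefficient $b_4$, I would exploit the two additional inequalities that are available: the $q^{4}$-coefficient of $\Theta_\Lambda$ must be a non-negative integer (the number of lattice vectors of norm $4$), and the coefficient $\lambda_4 = b_4/16^{4}+5\,b_5/16^{5}$ of $\vartheta_{3}^{\,n-32}E_{4}^{\,4}$ must itself be non-negative, as it equals the count of characteristic vectors of length $n-32$ divided by $2^{\,n-32}$. These two inequalities pin $b_4$ inside an explicit polynomial window in $n$. Substituting everything into $-b_5\le 2^{60}$ then collapses to a polynomial inequality in $n$ alone, whose leading term is of order $n^{4}$ arising from the $64\,b_4$ contribution to $-b_5$. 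For $n\ge 12885$ this inequality fails, giving the announced cutoff.

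The main obstacle is precisely the bookkeeping of $b_4$: unlike in the $n-32$ case, where positivity plus Belfiore-Sol\'e immediately pinned the last coefficient down to an error in $[0,\,64^{4})$, here we have two genuinely free coefficients and must carry the auxiliary inequalities through the computation. The constants in the resulting quartic in $n$ have to be computed carefully in order to recover the exact cutoff $12884$; conceptually, however, nothing new happens beyond the previous two proofs.
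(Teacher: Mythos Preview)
The paper's own proof does \emph{not} leave $b_4$ free. In the proof it sets
\[
b_4 \;=\; \tfrac{2}{3}n^{4}-84n^{3}+\tfrac{12430}{3}n^{2}-66542n,
\]
which is exactly the value forced by the vanishing of the $q^{4}$–coefficient of $\Theta_{\Lambda}$; in other words the paper is really assuming there are no vectors of norm $\le 4$, in line with the general theorem (``no vectors shorter than $k$'' for $k=5$). The ``$\le 3$'' in the statement is evidently a typo carried over from the $n-32$ case. Once $b_1,\dots,b_4$ are all fixed, the argument is structurally identical to the $n-32$ proof: positivity of $\Theta_{\Lambda}$ at $z=\tfrac14$ together with the Belfiore--Sol\'e conjecture pins $b_5$ to an interval of length $64^{5}$, then $\lambda_5=-b_5/16^{5}$ and Gaulter's bound $|\chi_{n-40}|\le 2^{n}$ give $-b_5\le 2^{60}$, a quartic inequality in $n$ that fails for $n\ge 12885$.

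Your proposal, by contrast, takes the printed hypothesis literally and therefore introduces a genuinely free $b_4$, which you then try to trap using (a) non-negativity of the norm-$4$ count and (b) the claim that $\lambda_4=b_4/16^{4}+5b_5/16^{5}$ equals a (non-negative) count of characteristic vectors of norm $n-32$. Step (b) is not something the paper establishes: identity~(\ref{ehto}) is stated only for the \emph{shortest} characteristic vectors, so you would need to supply an independent argument that the coefficient $\lambda_4$ is non-negative for an $n-40$ lattice. Without that, your control of $b_4$ is incomplete, and in any case the exact cutoff $12884$ you are aiming at is produced in the paper by the simpler computation with $b_4$ determined. So the divergence from the paper is caused by the typo in the statement; under the intended hypothesis your extra bookkeeping is unnecessary, and under the literal hypothesis your argument has a gap at the $\lambda_4\ge 0$ step.
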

\begin{proof} We proceed precisely as in the proof of the previous (general) theorem. The theta function of the lattice can be written as
\[
\vartheta^n+\sum_{\ell=1}^5 b_{\ell}\vartheta^{n-8\ell}\Delta_8^{\ell}
\]
with
\begin{alignat*}{1}
b_1&=-2n\\
b_2&=2n^2-46n\\
b_3&=-\frac{4}{3}+92n^2-\frac{4832}{3}n\\
b_4&=\frac{2}{3}n^4-84n^3+\frac{12430}{3}n^2-66542n.
\end{alignat*}
Hence
\begin{multline*}
b_5=64^4\cdot 2n-64^3(2n^2-46n)+64^2\left(\frac{4}{3}n^3-92n^2+\frac{4832}{3}n\right)\\-64\left(\frac{2}{3}n^4-84n^3+\frac{12430}{3}n^2-66542n\right)-s
\end{multline*}
with $s\in [0,64^5[$. Now $\lambda_5=-\frac{b_5}{16^5}$ Again, the following inequality must hold
\[
2^n\geq \lambda_52^{n-40},
\]
i.e.
\begin{multline*}
2^{60}\geq -b_5=-64^4\cdot 2n+64^3(2n^2-46n)-64^2\left(\frac{4}{3}n^3-92n^2+\frac{4832}{3}n\right)\\+64\left(\frac{2}{3}n^4-84n^3+\frac{12430}{3}n^2-66542n\right)+s\\ \geq -64^4\cdot 2n+64^3(2n^2-46n)-64^2\left(\frac{4}{3}n^3-92n^2+\frac{4832}{3}n\right)\\+64\left(\frac{2}{3}n^4-84n^3+\frac{12430}{3}n^2-66542n\right).\end{multline*}
We get $n\leq 12884$.
\end{proof}
%Let us first recall the definition of an $n-8k$-lattice:
%\begin{definition}

%\end{definition}

\section{Conclusions}
Clearly, there is a dependence between the kissing number and the secrecy gain in some cases, but in the general case, there is not. Of course, it would be interesting to see, whether there would be a dependence in the general case, if the consideration were limited to only even or odd unimodular lattices. Secondly, the conjecture by Belfiore and Sol\'e can be used to solve a question by Elkies in some special cases, namely, when the shortest vectors of an $n-8k$ lattice are of the length $k$.%\appendices
%\section{Proof of the First Zonklar Equation}
%Appendix one text goes here.

% you can choose not to have a title for an appendix
% if you want by leaving the argument blank
%\section{}
%Appendix two text goes here.

% use section* for acknowledgement
\section*{Acknowledgment}

The author would like to thank Prof. Frederique Oggier and Fuchun Lin for an inspiring conversation.

\end{document}